\documentclass[12pt]{article}
\usepackage{ragged2e}
\RaggedRight
\usepackage[utf8]{inputenc}
\usepackage{setspace}
\usepackage{etoolbox}
\AtBeginEnvironment{table}{\singlespacing}
\usepackage[margin=1.0in]{geometry}
\usepackage{graphicx}
\usepackage{amsmath}
\usepackage{amssymb}
\usepackage{amsfonts}
\usepackage{dsfont}
\usepackage{lineno}
\usepackage{xcolor}
\usepackage{hyperref}
\usepackage{subcaption}
\usepackage{algpseudocode}
\usepackage{algorithm}
\usepackage{amsthm}
\usepackage{rotating}
\usepackage{epsfig}
\usepackage{enumerate}
\usepackage{multirow}
\usepackage{graphics}
\usepackage{epstopdf}
\usepackage{url}

\newcommand{\bm}[1]{{\mbox{\boldmath $#1$}}}
\newcommand{\sN}{\scriptscriptstyle}

\newtheorem{theorem}{Theorem}[section]

\newtheorem{corollary}{Corollary}[section]

\usepackage{csquotes}

\usepackage[round,authoryear]{natbib}
\usepackage{array}
\usepackage{makecell}

\usepackage{titling}
\predate{}
\postdate{}
\usepackage{authblk}
\title{A penalized logistic generalized regression estimator}
\date{} 
\author[1,*]{Grayson W. White} 
\author[2]{Kelly S. McConville} 
\author[3]{Cooper S. Schumacher}
\affil[1]{{\small Department of Mathematics and Statistics, Reed College, Portland, OR, USA}}
\affil[2]{{\small Dominguez Center for Data Science, Bucknell University, Lewisburg, PA, USA}}
\affil[3]{{\small Genospace, Los Angeles, CA, USA}}
\affil[*]{{\small Corresponding author: Grayson W. White, gwhite@reed.edu}}

\begin{document}
\maketitle

\begin{abstract} 
Under a model-assisted framework, a penalized logistic generalized regression estimator is developed to estimate a finite population proportion from complex survey data and auxiliary data.  The proposed estimator controls the impact of unnecessary auxiliary variables through a lasso or ridge penalty.  A central limit theorem is derived for the penalized regression coefficients and for the penalized logistic generalized regression estimator.  Through simulations, it is shown that including the penalty increases the efficiency of the estimator when the true model is sparse. An example using United States Forest Service data demonstrates the applicability of the estimator for surveys with many available auxiliary variables.
\end{abstract}

{ 
\small \textbf{Keywords:} model-assisted estimation, complex surveys, lasso, ridge, logistic regression, regularization
}

\newpage

\section{Introduction}\label{intro}

The United States Forest Service Forest Inventory and Analysis Program (FIA) is tasked with monitoring the health and status of the nation's forests.  They produce state and domain estimates of means and totals for numerous forest attributes such as number of trees, biomass, merchantable volume, percent of forest land, mortality, forest type, and more. Under their current standard reporting procedures, FIA estimates forest parameters via post-stratification by combining data collected from a quasi-systematic sample of ground plots with data from high-resolution auxiliary variables collapsed into one categorical variable \citep{bec05}.  Because FIA has access to several high-resolution auxiliary data products, such as satellite imagery and topographic maps, utilizing the availabilty auxiliary data in a multivariable way could increase the efficiency of FIA's estimators. 

A common technique for incorporating auxiliary data is through a model-assisted estimator, such as the generalized regression estimator (GREG) \citep[][Section 6.3]{cas76,sar92}.  The GREG assumes the relationship between the study variable and auxiliary data can be captured with a linear model.  Since the estimator's variance can be reduced by removing extraneous variables, \citet{mcc17} introduced a survey-weighted version of Tibshirani's (\citeyear{tib96}) lasso criterion to estimate the regression coefficients (LASSO).  By penalizing the sum of the absolute value of the coefficient estimates, the LASSO shrinks the estimated coefficients, forcing some to zero and thereby reducing the model size.  The variability of an estimator can also be controlled by penalizing the sum of the squared values of the coefficient estimates.  This penalization results in ridge regression coefficient estimates when minimizing the residual sum of squares.  In the survey sampling setting,
\citet{bar84} and \citet{Chambers1996} considered the ridge penalization for a linear working model under a model-based paradigm. To construct stable calibration weights, the ridge penalization has also been explored in the design-based context \citep{RaoSingh1997, Theberge2000, BeaumontBocci2008}.

In this paper, we consider not a quantitative attribute but a binary, categorical attribute and are interested in estimating a population proportion. Since the logistic model is better suited than the linear model to handle a binary outcome, \citet{leh98} proposed the logistic generalized regression estimator (LGREG).  However, the LGREG also loses efficiency when extraneous variables are included in the model, warranting a lasso or ridge penalty. Such penalties for generalized linear models have been studied \citep{tib96, par07} but in a non-survey sampling context.  Therefore, we propose the penalized logistic generalized regression estimator (PLGREG) to estimate the population proportion.  This estimator utilizes a logistic working model and includes a penalization on the coefficient estimates.

The paper is organized as follows.  Section~\ref{sec:ests} includes the notation for the model-assisted framework.  In Section~\ref{subsec:glmlasso}, we summarize \citet{leh98}'s logistic generalized regression estimator and introduce the lasso and ridge penalized versions in Section~\ref{subsec:llasso}.  Section~\ref{sec:llassoresults} contains a central limit theorem of the penalized estimators. The technical conditions and a central limit theorem for the penalized logistic regression coefficients can be found in Appendix~\ref{sec:appendixA}.  Section~\ref{simulation} reports simulation results which show that the penalized estimators outperform the non-penalized estimators when the true model is sparse and highlights the importance of utilizing a logistic model instead of a linear model when estimating a finite population proportion. The estimators are applied to a FIA estimation problem in Section~\ref{sec:app}.

\section{Penalized logistic generalized regression estimators}\label{sec:ests}

We want to estimate the proportion of the population that are in a particular category of a categorical attribute of interest.  To do so, suppose the units of a finite population, denoted by $U$, can be indexed by $1, 2, \ldots, N$ and let $y_i$ equal 1 if the $i$-th unit is in category $c$ of the attribute of interest and zero otherwise.  Under this notation, the desired population proportion is given by $P_y = N^{-1} \sum_{i \in U} y_i$.  Assume a sample, $s \subset U$, is collected using a probability sampling design $p(\cdot)$, where $p(s)$ is the probability of selecting a particular sample $s$. Let $I_i$ equal 1 if the $i$-th element of $U$ is included in $s$ and 0 otherwise. Similarly, let $I_{ij}$ equal 1 if the $i$-th and $j$-th elements are in $s$ and 0 otherwise. The first and second order inclusion probabilities are $\pi_i = E_p(I_i) = P(i \in s)$ and $\pi_{ij} = E_p(I_i I_j) = P(i, j \in s)$, respectively, where $E_p$ denotes expectation with respect to the design.  Let $\bm{x}_i = (1, x_{1i}, \ldots, x_{di})$ represent the $d$ auxiliary covariates for the $i$-th unit and assume $\{\bm{x}_i\}_{i \in U}$ are known.

\subsection{Logistic generalized regression estimators}\label{subsec:glmlasso}

Under a model-assisted framework, a superpopulation model is used to describe how the finite population values are generated.  Employing a logistic regression model, assume the values of the study variable in the finite population, $\{y_i\}_{i \in \sN{U}}$, are independent realizations from a Bernoulli random variable.  For $i \in U$, let $$P(y_i = 1| \bm{x}_i; \bm{\beta}) = \mu(\bm{x}_i^{\sN{T}} \bm{\beta}) = \exp\left(\bm{x}_i^{\sN{T}} \bm{\beta}\right)\left[1 + \exp\left(\bm{x}_i^{\sN{T}} \bm{\beta}\right)\right]^{-1}$$ with the logit link function
\begin{align}\label{logisticmodel}
\mbox{logit}(\mu(\bm{x}_i^{\sN{T}} \bm{\beta})) = \log \left( \frac{\mu(\bm{x}_i^{\sN{T}} \bm{\beta})}{1 - \mu(\bm{x}_i^{\sN{T}} \bm{\beta})} \right) = \bm{x}_i^{\sN{T}} \bm{\beta}
\end{align}

where $\bm{\beta} = (\beta_o, \beta_1, \ldots, \beta_d)^{\sN{T}}$.  The finite population estimator of $\mu(\bm{x}^{\sN{T}} \bm{\beta})$ is given by $\mu(\bm{x}^{\sN{T}} \bm{\beta}_{\sN{N}})$ where the regression coefficients are estimated by minimizing the negative log-likelihood:
\begin{align*}
 \bm{\beta}_{\sN{N}} &=   \underset{\bm{\beta}}{\arg\min} \left[ -  \sum_{i \in U} \left\{y_i \bm{x}_i^{\sN{T}} \bm{\beta} -  \log \left[1 + \exp(\bm{x}_i^{\sN{T}} \bm{\beta}) \right] \right\} \right].
\end{align*}
Computing $\mu(\bm{x}_i^{\sN{T}} \bm{\beta}_{\sN{N}})$ for each $i \in U$, we can estimate $P_y$ with the difference estimator
\begin{align}
\widehat{P}_{y}^{\sN{DIFF}} = \frac{1}{N}\left[\sum_{i \in s} \frac{y_i - \mu(\bm{x}_i^{\sN{T}} \bm{\beta}_{\sN{N}})}{\pi_i} +  \sum_{i \in U} \mu(\bm{x}_i^{\sN{T}} \bm{\beta}_{\sN{N}}) \right], \label{diffest}
\end{align}
which is design unbiased and has design variance equal to 
\begin{align}\label{diffestvar}
\mbox{Var}_{\rm{p}}\left(\widehat{P}_{y}^{\sN{DIFF}}\right)=\frac{1}{N^2}\underset{i,j \in U}{\sum \sum} \pi_{ij} - \pi_i \pi_j\frac{y_i - \mu(\bm{x}_i^{\sN{T}} \bm{\beta}_{\sN{N}})}{\pi_i}\frac{y_j - \mu(\bm{x}_j^{\sN{T}} \bm{\beta}_{\sN{N}})}{\pi_j}
\end{align}
\citep{sar92}.  To highlight the value of a model-assisted estimator over a purely design-based estimator, notice that the Horvitz-Thompson estimator \citep{hor52} and its design variance are given in equations (\ref{diffest}) and (\ref{diffestvar}) when $\mu(\bm{x}_i^{\sN{T}} \bm{\beta}_{\sN{N}})$ is set to zero for all $i \in U$.  By comparing variances one can see that the  efficiency of the difference estimator over the Horvitz-Thompson estimator depends on how small the residuals, $\{y_i - \mu(\bm{x}_i^{\sN{T}} \bm{\beta}_{\sN{N}})\}_{i \in U}$, are.  

Since $y_i$ is only available for $i \in s$, $\bm{\beta}_{\sN{N}}$ cannot be computed and must be estimated by minimizing the survey-weighted negative log-likelihood, a Horvitz-Thompson estimator of the finite population negative log-likelihood:
\begin{align*}
 \hat{\bm{\beta}}&=   \underset{\bm{\beta}}{\arg\min} \left[ -  \sum_{i \in s} \frac{1}{\pi_i} \left\{y_i \bm{x}_i^{\sN{T}} \bm{\beta} -  \log \left[1 + \exp(\bm{x}_i^{\sN{T}} \bm{\beta}) \right] \right\} \right].
\end{align*}  
In the difference estimator, the finite population mean function, $\mu(\bm{x}^{\sN{T}} \bm{\beta}_{\sN{N}})$, is now estimated by $\mu(\bm{x}^{\sN{T}} \hat{\bm{\beta}})$ which results in 
\begin{align}\label{LGREG}
\hat{P}^{LGREG}_y = \frac{1}{N} \left( \sum_{i \in s} \frac{y_i - \mu(\bm{x}_i^{\sN{T}} \hat{\bm{\beta}}) }{\pi_i} + \sum_{i \in \sN{U}}\mu(\bm{x}_i^{\sN{T}} \hat{\bm{\beta}}) \right),
\end{align}
the logistic generalized regression estimator for the population proportion \citep{leh98}.

\subsection{Penalized logistic generalized regression estimator}\label{subsec:llasso}

If the model in (\ref{logisticmodel}) is sparse, meaning some of the coefficients are exactly zero, then estimating these coefficients can add unnecessary variability to the estimator.   An estimator's variability can also be impacted by highly correlated auxiliary variables.  To remedy these issues, we propose penalizing the size of the coefficient estimates, which will help stabilize the estimates and possibly shrink the model.  For a given penalty parameter $\lambda \geq 0$, the penalized estimates for the coefficients are
\begin{align}\label{coefPLGREG}
 \widehat{\bm{\beta}}^{\gamma} &=    \underset{\bm{\beta}}{\arg\min} \left[ -  \sum_{i \in s} \frac{1}{\pi_i} \left\{y_i \bm{x}_i^{\sN{T}} \bm{\beta} -  \log \left[1 + \exp(\bm{x}_i^{\sN{T}} \bm{\beta}) \right] \right\}  + \lambda\sum_{j=1}^d |\beta_j|^{\gamma} \right]
\end{align}

\noindent where $\hat{\beta}_{o}^{\gamma}$ is unpenalized.  While $\gamma$ could be any positive value, we consider $\gamma = 1$ which corresponds to a lasso penalty and $\gamma = 2$ which corresponds to a ridge penalty.  Both penalties shrink the estimated coefficients towards zero, but only the lasso performs model selection by forcing some estimates to be exactly zero.  The ridge penalty, on the other hand, is good for controlling instability caused by multi-collinearity in the auxiliary variables.

Inserting the penalized coefficient estimates into the estimated mean function gives the penalized logistic generalized regression estimator (PLGREG) of the population proportion,
\begin{align}\label{PLGREG}
\hat{P}^{\sN{PLGREG}}_y = \frac{1}{N} \left( \sum_{i \in s} \frac{y_i - \mu\left(\bm{x}_i^{\sN{T}} \hat{\bm{\beta}}^{\gamma}\right)}{\pi_i} + \sum_{i \in \sN{U}}\mu\left(\bm{x}_i^{\sN{T}} \hat{\bm{\beta}}^{\gamma}\right) \right).
\end{align}
Note, the survey-weighted coefficient vector $
\hat{\bm{\beta}}^{\gamma}$ is a function of the penalty parameter
$\lambda$.  In the simulation study of \S \ref{simulation}, we use
survey-weighted cross validation to select the optimal $\lambda$.  We also compare the performance of the estimator for $\gamma = 1$ and $\gamma = 2$.

To estimate the design-based variance of the PLGREG, we propose the standard variance estimator,
\begin{align}\label{var.est}
\widehat{\text{Var}}(\hat{P}^{\sN{PLGREG}}_y)= \frac{1}{N^2}\displaystyle\sum\limits_{i \in s}\displaystyle\sum\limits_{j \in s}\frac{\pi_{ij} - \pi_i \pi_j}{\pi_{ij}}\frac{\left[y_i - \mu\left(\bm{x}_i^{\sN{T}} \hat{\bm{\beta}}^{\gamma}\right)\right]}{\pi_i}\frac{\left[y_j - \mu\left(\bm{x}_j^{\sN{T}} \hat{\bm{\beta}}^{\gamma}\right)\right]}{\pi_j}
\end{align}
as is suggested for the LGREG by \citet{leh98}.  In the simulations in \S \ref{simulation} and the application in \S \ref{sec:app}, we compute the variance estimator given in Equation~\ref{var.est} for each estimator considered, replacing $\mu\left(\bm{x}_i^{\sN{T}} \hat{\bm{\beta}}^{\gamma}\right)$ with the appropriate estimated mean function. This estimator tends to under-estimate the variance, partly because it does not account for the variability induced by estimating the model coefficients.  Therefore, we also compare the performance of the standard variance estimator to a simple bootstrap variance estimator.  Using the technique described in \S 3 of \citet{mas16}, several with replacement, bootstrap samples are taken from the original sample and the estimate is computed for each bootstrap sample.  The variance estimator is given by the variance of the bootstrap estimates after making a bias adjustment to account for the without replacement design.  More complex bootstrap procedures which handle multi-stage sampling designs and stratification are beyond the scope of this paper.

\section{Asymptotic properties of the penalized logistic generalized regression estimators}\label{sec:llassoresults}

In this section, we present the design-based asymptotic normality for the PLGREG when $\gamma = 1$ (LLASSO) or $\gamma =2$ (LRIDGE).  Asymptotically, the penalized estimator has the same design properties as the LGREG and the variance estimator, given in Equation~\ref{var.est}, is consistent.  Following the standard set-up for survey sampling asymptotics, we consider a sequence of nested populations  $U_1 \subset U_2 \subset \cdots \subset U_{\sN{N}} \subset \cdots$. Let $s_{\sN{N}}\subset U_{\sN{N}}$ be selected according to a sampling design $\rm{p}_{\sN{N}}(\cdot)$ with fixed sample size $n_{\sN{N}}$. We suppress  the subscript $N$ in $n$ as well as in $\pi_j$ and $\pi_{jk}$  for simplicity of notation but utilize the subscript in $\widehat{\bm{\beta}}^{\sN{\gamma}}_{\sN{N}}$ and $\bm{\beta}_{\sN{N}}$ to emphasize that the coefficients change with $N$. The assumptions and a proof of a central limit theorem for the coefficient estimators are given in Appendix~\ref{sec:appendixA}. Before stating the main results, we make a few remarks regarding the assumptions.

\textit{Remark 1.}  By assumptions (\ref{lambda.N}) and (\ref{nrateN}), we allow the penalty parameter to increase as the sample size and population size increase but we keep the number of predictors fixed.  As discussed in \citet{mcc17}, the following properties should continue to hold if the number of predictors increase at a slower rate than the population size so that the sampling error, not the modeling error, remains the dominant source of variability.  In many natural resource inventory applications, the population size, $N$, represents the number of pixels for the landscape and is much larger than the number of auxiliary images, $d$, available for the landscape.  While asymptotic results where $d$ increases as $N$ increases are likely of interest in other survey sampling settings, such formulations are beyond the scope of this paper.

\sloppy\textit{Remark 2.} To prove asymptotic normality of $\widehat{\bm{\beta}}^{\sN{\gamma}}_{\sN{N}}$ requires a Taylor expansion of $\log \left[1 + \exp(\bm{x}_i^{\sN{T}} \bm{\beta}) \right]$, a component of the survey-weighted log-likelihood.  Assumptions (A3) and (A4) provide the regularity conditions needed for the second order term to converge in probability and for the remainder term to be asymptotically negligible.

\textit{Remark 3.} The remaining assumptions, (\ref{HT.cov.big})-(\ref{est.cov.mat}), provide central limit theory for the Horvitz-Thompson estimators needed in our derivations.  For commonly used sampling designs, such as simple random sampling with and without replacement, a central limit theorem for Horvitz-Thompson estimators holds.  See \citeauthor{ful09} (2009, Section 1.3) for more examples of where normal theory holds under a complex design.

 The following results provide asymptotic normality for the penalized logistic generalized regression estimator.  The proofs are omitted as they follow directly from Theorem \ref{thm:CLT.beta.design}.

\begin{theorem}\label{thm:CLT.P.hat.lasso}
Under assumptions (\ref{lambda.N})--(\ref{HT.CLT.big}) and for $\gamma=1,2$, the estimator $\widehat{P}_{\sN{N}}^\gamma$ is asymptotically equivalent to the difference estimator, given in Equation~\ref{diffest},
in the sense that
\begin{align}
\sqrt{n}\left(\widehat{P}_{\sN{N}}^\gamma -\widehat{P}_{y}^{\sN{DIFF}}\right)=o_{\rm{p}}(1), \nonumber
\end{align}
so that
\begin{align*}
\left\{\mbox{Var}_{\rm{p}}\left(\widehat{P}_{y}^{\sN{DIFF}}\right) \right\}^{-1/2} \left(\widehat{P}_{\sN{N}}^\gamma - P_y\right) \overset{D}{\rightarrow} N(0,1).
\end{align*}

\end{theorem}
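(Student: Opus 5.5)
The plan is to write the difference $\widehat{P}_{\sN{N}}^\gamma - \widehat{P}_{y}^{\sN{DIFF}}$ as a weighted sum of differences of mean functions, then linearize in $\widehat{\bm{\beta}}^{\gamma}_{\sN{N}} - \bm{\beta}_{\sN{N}}$ and use Theorem~\ref{thm:CLT.beta.design}. Subtracting (\ref{diffest}) from (\ref{PLGREG}) gives
\begin{align*}
\widehat{P}_{\sN{N}}^\gamma - \widehat{P}_{y}^{\sN{DIFF}} = \frac{1}{N}\sum_{i \in U}\left(1 - \frac{I_i}{\pi_i}\right)\left\{\mu\left(\bm{x}_i^{\sN{T}} \widehat{\bm{\beta}}^{\gamma}_{\sN{N}}\right) - \mu\left(\bm{x}_i^{\sN{T}} \bm{\beta}_{\sN{N}}\right)\right\}.
\end{align*}
Because $\mu$ is the logistic function, $\mu' = \mu(1-\mu)\le 1/4$ and $|\mu''|\le 1/4$ are bounded. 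A second-order Taylor expansion about $\bm{\beta}_{\sN{N}}$ therefore yields
\begin{align*}
\widehat{P}_{\sN{N}}^\gamma - \widehat{P}_{y}^{\sN{DIFF}} = \left[\frac{1}{N}\sum_{i \in U}\left(1 - \frac{I_i}{\pi_i}\right)\mu'\left(\bm{x}_i^{\sN{T}}\bm{\beta}_{\sN{N}}\right)\bm{x}_i^{\sN{T}}\right]\left(\widehat{\bm{\beta}}^{\gamma}_{\sN{N}} - \bm{\beta}_{\sN{N}}\right) + R_{\sN{N}},
\end{align*}
where $|R_{\sN{N}}| \le C\, N^{-1}\sum_{i\in U}(1 + I_i/\pi_i)\,\|\bm{x}_i\|^2\,\|\widehat{\bm{\beta}}^{\gamma}_{\sN{N}} - \bm{\beta}_{\sN{N}}\|^2$.

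The linear term is handled as follows. The bracketed vector is the difference between a population mean and its Horvitz--Thompson estimator, applied to the fixed-dimensional variable $\mu'(\bm{x}_i^{\sN{T}}\bm{\beta}_{\sN{N}})\bm{x}_i$. By the moment and design assumptions (\ref{HT.cov.big})--(\ref{HT.CLT.big}), which give HT variances of order $n^{-1}$ for bounded-moment variables, this vector is $O_{\rm p}(n^{-1/2})$. Theorem~\ref{thm:CLT.beta.design} gives $\sqrt{n}(\widehat{\bm{\beta}}^{\gamma}_{\sN{N}} - \bm{\beta}_{\sN{N}}) = O_{\rm p}(1)$ for $\gamma=1,2$. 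The penalty rate (\ref{lambda.N}) is exactly what forces the penalty's contribution to vanish, so no bias term survives. Hence the linear term is $O_{\rm p}(n^{-1})$, and multiplying by $\sqrt{n}$ gives $o_{\rm p}(1)$.

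The remainder is handled in the same way. The moment assumption on $\bm{x}_i$ (A3), together with a lower bound $\pi_i \ge \lambda^* n/N$ among the design conditions, gives $N^{-1}\sum_{i\in s}\pi_i^{-1}\|\bm{x}_i\|^2 = O_{\rm p}(1)$, and the population version is $O(1)$ by (A3) directly. So $R_{\sN{N}} = O_{\rm p}(n^{-1})$ and $\sqrt{n}R_{\sN{N}} = o_{\rm p}(1)$. This establishes the first display. I expect the main obstacle to be this step, specifically verifying that the stated assumptions really give uniform control of the weighted sum of $\|\bm{x}_i\|^2$ under the design. If they do not, I would instead use the mean-value form with $\mu'$ bounded and bound $\sup_i \|\bm{x}_i\|$ via (A4).

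For the second display, write
\begin{align*}
\left\{\mbox{Var}_{\rm p}(\widehat{P}_{y}^{\sN{DIFF}})\right\}^{-1/2}\left(\widehat{P}_{\sN{N}}^\gamma - P_y\right) = \left\{\mbox{Var}_{\rm p}(\widehat{P}_{y}^{\sN{DIFF}})\right\}^{-1/2}\left(\widehat{P}_{y}^{\sN{DIFF}} - P_y\right) + \left\{n\,\mbox{Var}_{\rm p}(\widehat{P}_{y}^{\sN{DIFF}})\right\}^{-1/2}\sqrt{n}\left(\widehat{P}_{\sN{N}}^\gamma - \widehat{P}_{y}^{\sN{DIFF}}\right).
\end{align*}
The first term converges to $N(0,1)$ by the Horvitz--Thompson central limit assumption (\ref{HT.CLT.big}), applied to the residuals $y_i - \mu(\bm{x}_i^{\sN{T}}\bm{\beta}_{\sN{N}})$, since $\widehat{P}_{y}^{\sN{DIFF}}$ is a Horvitz--Thompson estimator of those residuals plus a constant. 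In the second term, $n\,\mbox{Var}_{\rm p}(\widehat{P}_{y}^{\sN{DIFF}})$ is bounded away from zero by the positive-definiteness assumption (\ref{HT.cov.big}), and $\sqrt{n}(\widehat{P}_{\sN{N}}^\gamma - \widehat{P}_{y}^{\sN{DIFF}})$ is $o_{\rm p}(1)$ by the first part. Slutsky's theorem then completes the proof.
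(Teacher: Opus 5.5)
Your strategy is the one the paper intends: it omits the proof, saying only that it follows from Theorem~\ref{thm:CLT.beta.design}. Your identity for $\widehat{P}_{\sN{N}}^\gamma-\widehat{P}_{y}^{\sN{DIFF}}$, the Taylor expansion and the Slutsky step are all sound. Your second display is fully covered by the stated assumptions. Indeed $\sqrt{n}(\widehat{P}_{y}^{\sN{DIFF}}-P_y)=z_{\sN{N}1}-z_{\sN{N}p+2}$, so (A6) gives its limit. Also $n\,\mbox{Var}_{\rm p}(\widehat{P}_{y}^{\sN{DIFF}})=\bm{a}^{\sN{T}}\bm{\Sigma}_{\sN{N}}\bm{a}\to\bm{a}^{\sN{T}}\bm{\Sigma}\bm{a}>0$ by (A5), with $\bm{a}=(\bm{e}_1^{\sN{T}},-\bm{e}_1^{\sN{T}})^{\sN{T}}$.

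The gap is in how you justify the two negligibility steps: you attribute to (A3)--(A6) conditions they do not contain.
\begin{itemize}
\item (A5)--(A6) concern only $\bm{z}_{\sN{N}}$, i.e.\ HT errors of $y_i\bm{x}_i$ and $\phi'(\bm{x}_i^{\sN{T}}\bm{\beta}_{\sN{N}})\bm{x}_i$. They say nothing about $\mu'(\bm{x}_i^{\sN{T}}\bm{\beta}_{\sN{N}})\bm{x}_i=\phi''(\bm{x}_i^{\sN{T}}\bm{\beta}_{\sN{N}})\bm{x}_i$, so your $O_{\rm p}(n^{-1/2})$ claim for the bracket is unsupported.
\item There is no lower bound on $\pi_i$ among (A1)--(A6).
\item (A3) is a $\phi''$-weighted condition, and $\phi''$ decays exponentially, so it is not a moment bound on $\bm{x}_i$.
\item (A4) is an averaged condition and cannot bound $\sup_i\|\bm{x}_i\|$, so your fallback fails too.
\end{itemize}

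The repair is cheap, but it requires stating extra regularity, which the paper's ``follows directly'' tacitly needs as well.
\begin{itemize}
\item Linear term: you only need the bracket to be $o_{\rm p}(1)$, because $\sqrt{n}(\widehat{\bm{\beta}}^{\gamma}_{\sN{N}}-\bm{\beta}_{\sN{N}})=O_{\rm p}(1)$. Since the first coordinate of $\bm{x}_i$ is $1$, the bracket is the transpose of the first column of $\bm{V}_{\sN{N}1}-\widehat{\bm{V}}_{\sN{N}1}$, where $\bm{V}_{\sN{N}1}=N^{-1}\sum_{i\in U}\phi''(\bm{x}_i^{\sN{T}}\bm{\beta}_{\sN{N}})\bm{x}_i\bm{x}_i^{\sN{T}}$. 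So (A3) suffices once you also assume $\bm{V}_{\sN{N}1}\to\bm{V}_1$.
\item Remainder: assume $N^{-1}\sum_{i\in U}\|\bm{x}_i\|^2=O(1)$. Since $E_{\rm p}(I_i/\pi_i)=1$, Markov's inequality gives $N^{-1}\sum_{i\in U}(1+I_i/\pi_i)\|\bm{x}_i\|^2=O_{\rm p}(1)$ with no condition on $\pi_i$. Hence $\sqrt{n}R_{\sN{N}}=O_{\rm p}(n^{-1/2})$.
\end{itemize}
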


\begin{theorem}\label{thm:var.est.result}
Under assumptions (\ref{lambda.N})--(\ref{est.cov.mat}) and for $\gamma = 1,2$,

\begin{align}
\widehat{V}(\widehat{P}_{\sN{N}}^\gamma) 
&= \mbox{Var}_{\rm{p}}\left( \widehat{P}_{y}^{\sN{DIFF}} \right) + o_{\rm{p}} \left( n^{-1}\right). \nonumber 
\end{align}
\end{theorem}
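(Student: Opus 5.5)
The plan is to compare the variance estimator with the one that would be used if $\bm{\beta}_{\sN{N}}$ were known. Define $\widetilde{V}$ as the same double sum as in Equation~\ref{var.est}, but with residuals $e_i = y_i - \mu(\bm{x}_i^{\sN{T}} \bm{\beta}_{\sN{N}})$ in place of $\hat e_i = y_i - \mu(\bm{x}_i^{\sN{T}} \widehat{\bm{\beta}}^{\gamma}_{\sN{N}})$. The argument then has two steps. First, show $\widetilde V = \mbox{Var}_{\rm{p}}(\widehat{P}_{y}^{\sN{DIFF}}) + o_{\rm{p}}(n^{-1})$. Second, show $\widehat V - \widetilde V = o_{\rm{p}}(n^{-1})$.

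The first step is the standard consistency of the Horvitz--Thompson variance estimator for a fixed population variable. The $e_i$ do not depend on the sample, and $\widetilde V$ is design-unbiased for $\mbox{Var}_{\rm{p}}(\widehat{P}_{y}^{\sN{DIFF}})$. Assumption (\ref{est.cov.mat}), which concerns consistency of the Horvitz--Thompson covariance matrix estimator for bounded-moment population vectors, together with the moment conditions on $y_i\in\{0,1\}$ and the bounded $\mu$, gives $n\widetilde V - n\,\mbox{Var}_{\rm{p}} = o_{\rm{p}}(1)$.

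For the second step, write $\hat e_i = e_i - \delta_i$ with $\delta_i = \mu(\bm{x}_i^{\sN{T}} \widehat{\bm{\beta}}^{\gamma}_{\sN{N}}) - \mu(\bm{x}_i^{\sN{T}} \bm{\beta}_{\sN{N}})$. A mean value expansion gives $\delta_i = \mu'(\bm{x}_i^{\sN{T}}\bm{\beta}^*_i)\,\bm{x}_i^{\sN{T}}(\widehat{\bm{\beta}}^{\gamma}_{\sN{N}}-\bm{\beta}_{\sN{N}})$, where $0<\mu'\le 1/4$. Theorem~\ref{thm:CLT.beta.design} gives $\widehat{\bm{\beta}}^{\gamma}_{\sN{N}}-\bm{\beta}_{\sN{N}} = O_{\rm{p}}(n^{-1/2})$; under assumption (\ref{lambda.N}) the penalty is $o(\sqrt n)$ relative to the scaled score, so it does not disturb this rate. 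Expanding the bilinear form, $\widehat V - \widetilde V = -2B(e,\delta) + B(\delta,\delta)$, where
$$B(a,b)=N^{-2}\sum_{i,j\in s}\frac{\pi_{ij}-\pi_i\pi_j}{\pi_{ij}}\frac{a_i}{\pi_i}\frac{b_j}{\pi_j}.$$
Each entry of $\delta$ is $\bm{x}_i^{\sN{T}}$ times an $O_{\rm{p}}(n^{-1/2})$ vector, up to the bounded factor $\mu'$. So $B(e,\delta)$ is bounded by $\|\widehat{\bm{\beta}}^{\gamma}_{\sN{N}}-\bm{\beta}_{\sN{N}}\|$ times the absolute version of the Horvitz--Thompson covariance-type quantity built from $e_i$ and $\bm{x}_i$. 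Under the design conditions in (\ref{HT.cov.big})--(\ref{est.cov.mat}) and the bounded moments of $\bm{x}_i$ in (A3)--(A4), that quantity is $O_{\rm{p}}(n^{-1})$, so $B(e,\delta)=O_{\rm{p}}(n^{-1})\,O_{\rm{p}}(n^{-1/2})=o_{\rm{p}}(n^{-1})$. The same argument gives $B(\delta,\delta)=O_{\rm{p}}(n^{-1})\,O_{\rm{p}}(n^{-1})$.

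The main obstacle is the factor $\mu'(\bm{x}_i^{\sN{T}}\bm{\beta}^*_i)$. It varies with $i$ and with the sample, so $\delta$ is not literally a fixed linear function of $\bm{x}_i$ times the coefficient error. My first approach is to use only the bound $|\mu'|\le 1/4$ and bound $B$ by the absolute-value sums $N^{-2}\sum\sum |\pi_{ij}-\pi_i\pi_j|\,\pi_{ij}^{-1}\pi_i^{-1}\pi_j^{-1}\,|e_i|\,\|\bm{x}_i\|\,\|\bm{x}_j\|$. I expect the design assumptions (bounded $n\,\max_{i\ne j}|\pi_{ij}-\pi_i\pi_j|$, $\pi_i\ge\lambda>0$, $\pi_{ij}$ bounded below) to make this $O_{\rm{p}}(n^{-1})$, using the bounded fourth moments of $\bm{x}$. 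If those absolute bounds are not available from the stated assumptions, the fallback is a second-order Taylor expansion. This separates $\mu'(\bm{x}_i^{\sN{T}}\bm{\beta}_{\sN{N}})\bm{x}_i^{\sN{T}}$, a fixed population vector to which (\ref{est.cov.mat}) applies directly, from a remainder of order $\|\bm{x}_i\|^2 O_{\rm{p}}(n^{-1})$, which is controlled as in Remark 2.
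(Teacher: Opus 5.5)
Your decomposition $\widehat V=\widetilde V-2B(e,\delta)+B(\delta,\delta)$ is algebraically fine, but neither step is supported by (A1)--(A7).

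In Step~1 you treat (A7) as a generic consistency condition for Horvitz--Thompson covariance estimators of fixed population variables. It is not. (A7) concerns only the pair $(z_{N1},z_{N,p+2})$, the HT totals of $y_i$ and $\phi'(\bm{x}_i^{\sN{T}}\bm{\beta}_{\sN{N}})$ (recall $x_{i0}=1$). Its estimator $\widehat{\bm{\Sigma}}^*_{N}$ also already has the fitted coefficient plugged in. No assumption says that your $\widetilde V$, built with $\bm{\beta}_{\sN{N}}$, is consistent. Design-unbiasedness alone does not give this; you would need conditions on higher-order inclusion probabilities.

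In Step~2, none of the conditions you use are among the hypotheses: $\pi_i$ and $\pi_{ij}$ bounded below, $\limsup_N n\max_{i\ne j}|\pi_{ij}-\pi_i\pi_j|<\infty$, and bounded moments of $\|\bm{x}_i\|$. (A3)--(A4) only control $\phi''$- and $\phi'''$-weighted sums, which do not bound unweighted moments of $\|\bm{x}_i\|$. (A5)--(A6) give a CLT for specific totals, not absolute bounds on $\pi_{ij}$-weighted double sums. Your fallback fails for the same reason. (A7) does not apply to $\mu'(\bm{x}_i^{\sN{T}}\bm{\beta}_{\sN{N}})\bm{x}_i$, and (A4) controls the cubic remainder of $\bm{C}_{\sN{N}}(\bm{u})$, not a bilinear form with weights $(\pi_{ij}-\pi_i\pi_j)/\pi_{ij}$.

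What you are missing is that (A7) as stated yields the theorem by pure algebra, which is why the paper can omit the proof. Here $\hat{\bm{\beta}}_{\sN{N}}$ is read as the fitted penalized coefficient, as in the proof of Theorem~\ref{thm:CLT.beta.design}. Since $\mu=\phi'$, expanding $\hat e_i\hat e_j$ gives
\[
n\widehat{V}(\widehat{P}_{N}^\gamma)=\widehat{\Sigma}^{(yy)}_{N}-\widehat{\Sigma}^{(y\phi')}_{N}-\widehat{\Sigma}^{(\phi'y)}_{N}+\widehat{\Sigma}^{(\phi'\phi')}_{N}=\bm{c}^{\sN{T}}\widehat{\bm{\Sigma}}^*_{N}\bm{c},\qquad \bm{c}=(1,-1)^{\sN{T}}.
\]
Also $z_{N1}-z_{N,p+2}=\sqrt{n}\,(\widehat{P}_y^{\sN{DIFF}}-P_y)$, so $n\,\mbox{Var}_{\rm{p}}(\widehat{P}_y^{\sN{DIFF}})=\bm{c}^{\sN{T}}\bm{\Sigma}^*_{N}\bm{c}$. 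Hence (A7) directly gives $n\widehat{V}-n\,\mbox{Var}_{\rm{p}}(\widehat{P}_y^{\sN{DIFF}})=o_{\rm{p}}(1)$. No linearization of $\mu$ and no rate for $\widehat{\bm{\beta}}^{\gamma}_{N}$ is needed.

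Your two-step route is the natural one if you want to derive this from primitive design conditions rather than assume (A7). In effect it proves the needed contrast of (A7) from two ingredients: consistency of the fixed-residual variance estimator and $\sqrt{n}$-consistency of $\widehat{\bm{\beta}}^\gamma_N$. But those primitive conditions would then have to be added as hypotheses, since they do not follow from (A1)--(A7).
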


Since the variance estimator of the penalized logistic generalized regression estimator is asymptotically consistent for the variance of the difference estimator, the variance estimator can be inserted into the central limit theorem result.  The result provides justification for  normal-based confidence intervals when the sample size is large.

\begin{corollary} \label{cor:CLT.t.hat.lasso}
Under assumptions (\ref{lambda.N})--(\ref{est.cov.mat}) and for $\gamma=1,2$,

\begin{align*}
\left\{\widehat{V}(\widehat{P}_{\sN{N}}^\gamma) \right\}^{-1/2} \left(\widehat{P}_{\sN{N}}^\gamma - P_y\right) \overset{D}{\rightarrow} N(0,1).
\end{align*}
\end{corollary}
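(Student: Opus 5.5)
The corollary is the usual Slutsky step combining Theorem~\ref{thm:CLT.P.hat.lasso} and Theorem~\ref{thm:var.est.result}. Write $V_N = \mbox{Var}_{\rm{p}}(\widehat{P}_{y}^{\sN{DIFF}})$ and $\widehat{V}_N = \widehat{V}(\widehat{P}_{\sN{N}}^\gamma)$. The plan is to factor the studentized statistic as
\begin{align*}
\widehat{V}_N^{-1/2}\left(\widehat{P}_{\sN{N}}^\gamma - P_y\right) = \left(\frac{V_N}{\widehat{V}_N}\right)^{1/2} V_N^{-1/2}\left(\widehat{P}_{\sN{N}}^\gamma - P_y\right).
\end{align*}
The second factor converges in distribution to $N(0,1)$ by Theorem~\ref{thm:CLT.P.hat.lasso}. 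Slutsky's theorem then gives the corollary, provided we show $V_N/\widehat{V}_N \overset{p}{\rightarrow} 1$. The continuous mapping theorem applied to $t \mapsto t^{1/2}$ on a neighborhood of $1$ handles the square root.

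To get the ratio, start from Theorem~\ref{thm:var.est.result}, which gives $\widehat{V}_N = V_N + o_{\rm{p}}(n^{-1})$. Dividing through,
\begin{align*}
\frac{\widehat{V}_N}{V_N} = 1 + \frac{o_{\rm{p}}(n^{-1})}{V_N} = 1 + \frac{o_{\rm{p}}(1)}{n V_N}.
\end{align*}
So everything reduces to showing that $n V_N$ is bounded away from zero for all large $N$.

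This is the one place where a design assumption is really needed. I expect the assumption controlling the Horvitz--Thompson covariance of the residual totals in Appendix~\ref{sec:appendixA} to supply it, specifically (\ref{HT.cov.big}) or (\ref{HT.CLT.big}). The argument would go as follows.
\begin{itemize}
\item Those assumptions state that the design covariance matrix of Horvitz--Thompson means, scaled by $n$, converges to a positive definite limit.
\item $V_N$ is the design variance of the Horvitz--Thompson estimator of the residual mean $N^{-1}\sum_{U}\{y_i-\mu(\bm{x}_i^{\sN{T}}\bm{\beta}_{\sN{N}})\}$.
\item Under the paper's assumptions, $n V_N \to \sigma^2 > 0$. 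This is also implicitly required for Theorem~\ref{thm:CLT.P.hat.lasso} to be a nondegenerate limit.
\end{itemize}
If the stated assumptions do not literally contain this, I would add it explicitly as a non-degeneracy condition.

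Granting $nV_N \geq c > 0$ eventually, we get $o_{\rm{p}}(1)/(nV_N) = o_{\rm{p}}(1)$, hence $\widehat{V}_N/V_N \overset{p}{\rightarrow} 1$. By continuous mapping, $(V_N/\widehat{V}_N)^{1/2} \overset{p}{\rightarrow} 1$. Since $\widehat{V}_N>0$ with probability tending to one, the studentized statistic is well defined, and Slutsky's theorem completes the argument. I regard the lower bound on $nV_N$ as the main obstacle; the rest is routine.
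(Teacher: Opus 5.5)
Your proposal is correct and is exactly the Slutsky combination of Theorems~\ref{thm:CLT.P.hat.lasso} and~\ref{thm:var.est.result} that the paper intends; the paper omits the proof and only remarks that the consistent variance estimator can be inserted into the central limit theorem. The non-degeneracy you hedge on is already contained in assumption (\ref{HT.cov.big}), so no extra condition is needed. Because $\bm{x}_i$ has a leading $1$, we have $\sqrt{n}(\widehat{P}_{y}^{\mathrm{DIFF}} - P_y) = z_{N1} - z_{N,p+2}$, hence $nV_N = (e_1-e_{p+2})^{T}\bm{\Sigma}_N(e_1-e_{p+2}) \to (e_1-e_{p+2})^{T}\bm{\Sigma}(e_1-e_{p+2}) > 0$ by positive definiteness of $\bm{\Sigma}$.
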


\section{Simulation study}\label{simulation} 

To address the comparative performance of the estimators for finite samples, we conduct a simulation study of the following estimators:

\singlespacing
\begin{tabular}{l|l|l|l}\hline
\textbf{Estimator} & \textbf{Working Model} & \textbf{Penalized} & \textbf{Reference}\\
\hline\hline
LLASSO & Logistic & Yes & Equation~\ref{PLGREG} with $\gamma = 1$\\
LRIDGE & Logistic & Yes & Equation~\ref{PLGREG} with $\gamma = 2$\\
LGREG & Logistic & No & Equation~\ref{LGREG} \\
LASSO & Linear & Yes & \citet{mcc17} \\
GREG & Linear & No & \citet{cas76}\\ 
PS & Linear with & No & \citet{coc77}\\
& categorical predictor &&\\
HT & None & No & Equation~\ref{diffest} with $\mu(\bm{x}_i^{\sN{T}} \bm{\beta}_{\sN{N}}) = 0$
\end{tabular}
\\

\doublespacing
\noindent Including different combinations of working models and penalizations allows conclusions to be drawn about when to utilize a particular working model and about how the penalizations impact the efficiency of the estimators under various situations.  The PS is included since this is FIA's standard estimation method.  We examine the bias and efficiency of the estimators when the sparsity of the model and the strength of the model are varied.  We also present the bias of the variance estimators and the confidence interval coverage.

 \subsection{Set-Up} \label{setup}
To generate a single population $U$ of size $N$=10,000, a set of $d$ covariates is drawn from the multivariate normal distribution with a mean vector of zeros and a variance-covariance matrix where the value in the $i$th row and $j$th column is $0.2^{\lvert i-j\rvert}$ for $i, j \in \{1, 2, \ldots, d\}$. The values of the response variable are generated from independent draws of a Bernoulli distribution with probability of success given by the logistic model found in Equation~\ref{logisticmodel} where the values of $\bm{\beta}$ change for the different scenarios considered. To build the PS, a categorical variable is created by transforming the most predictive variable into two categories: $\{x_k \geq 0\}, \{x_k <0\}$.

For each set-up, after generating a single finite population, $\{\bm{x}_j, y_j\}_{j \in U}$, 1,000 samples of size 200 are selected using simple random sampling without replacement.  For each sample, the estimators are computed along with the standard variance estimator given in Equation~\ref{var.est} and the bootstrap variance estimator described in \S \ref{subsec:llasso}.  All estimators are fit using the \texttt{mase} R package \citep{mase}. Within the \texttt{mase} R package, the coefficients for the lasso and ridge estimators are computed using the $\texttt{glmnet}$ function in the R package $\texttt{glmnet}$ \citep{fri10} and the penalty parameter is found via cross-validation using the $\texttt{cv.glmnet}$ function. Note that while $\texttt{glmnet}$ optimizes a slightly different equation than Equation~\ref{coefPLGREG}, the resulting estimated coefficients are equivalent (see Appendix~\ref{sec:appendixB} for details).  The $\texttt{boot}$ function in the R package $\texttt{boot}$ \citep{can16} is used to draw the bootstrap samples.

For each scenario considered, there is a single finite population and therefore the empirical design mean squared error, variance, and bias can be computed for every estimator.  (We found the percent relative design bias of each estimator to be less than 2$\%$ and so do not present the values here.)  The percent relative bias of the variance estimators and the empirical coverage of 95$\%$ confidence intervals for the population proportion are also found.

\subsection{Sparsity trials} \label{sparsity}

\begin{table}[ht]
\centering
\begin{tabular}{rrrrrrr}
  \hline
 & LRIDGE&LGREG & LASSO& GREG & PS & HT \\ 
  \hline
no sparsity (20/20)  & 1.10 & 1.36 & 1.62 & 1.64 & 3.25 & 3.41 \\
  half sparsity (10/20) & 1.11 & 1.21 & 1.47 & 1.50 & 2.81 & 3.08 \\
no model (0/20) & 0.99 & 1.11 & 0.99 & 1.13 & 0.99 & 0.99 \\
   \hline
\end{tabular}
\caption{Ratio of design MSE for each estimator to design MSE of LLASSO while varying the sparsity of the logistic model}
\label{Table:Ratio_MSE_sparsity}
\end{table}

For the sparsity simulations, we set $d$ equal to 20 and generate $y$ under three scenarios: a no sparsity model, a half sparse model, and a completely sparse model.  In each scenario, half of the non-zero coefficients are set to 2, the other half to 0.5 and the intercept $\beta_o$ is adjusted so that the true population proportion is around 0.36.  When no auxiliary variables relate with $y$, implying no model should be used, the HT, PS, and the penalized estimators (LLASSO, LRIDGE, and LASSO) all have roughly the same efficiency and are more efficient than the other unpenalized estimators.  As the sparsity decreases, the HT and PS are much less efficient than the other estimators and the penalized estimators outperform the corresponding unpenalized estimators. The methods based on a logistic model out-perform those based on the linear model but not significantly.  The lasso penalization is more efficient than the ridge penalty when the model is present.  Simulations with models of size 40 were also run but the algorithm to fit the coefficients of the LGREG often failed to converge so the results are not displayed.

\begin{table}[ht]
\centering
\begin{tabular}{rrrrrrrr}
  \hline
& LLASSO & LRIDGE &LGREG & LASSO& GREG &PS & HT \\ 
\hline
no sparsity & -47.96 & -32.22 & -76.08 & -18.19 & -20.01 & -1.93 & -1.43 \\ 
no sparsity (bootstrap)  & 0.42 & -0.40 & 21.07 & 0.99 & 1.92 & -0.92 & -1.46 \\ 

  half sparsity & -32.40 & -26.99 & -56.88 & -11.68 & -16.77 & 0.66 & 0.20 \\ 
  half sparsity (bootstrap)    & -0.96 & 1.03 & 40.96 & 2.19 & 6.60 & 2.02 & 0.07 \\ 

   no model  & -9.18 & -8.61 & -25.10 & -8.64 & -26.01 & -6.87 & -6.31 \\
   no model (bootstrap)   & -8.26 & -7.87 & -10.06 & -7.84 & -5.63 & -5.98 & -6.24 \\ 

   \hline
\end{tabular}
\caption{Percent relative bias of variance estimators \label{table.var.est}}
\end{table}

\begin{table}[ht]
\centering
\begin{tabular}{rrrrrrrr}
  \hline
& LLASSO & LRIDGE &LGREG & LASSO& GREG &PS & HT \\ 
  \hline
no sparsity & 81.20 & 87.50 & 54.40 & 91.80 & 91.50 & 94.40 & 94.90 \\ 
no sparsity (bootstrap)  & 93.90 & 94.70 & 95.90 & 95.20 & 94.90 & 94.50 & 94.50 \\
half sparsity & 87.50 & 90.40 & 75.50 & 92.60 & 92.40 & 94.80 & 94.80 \\ 
  half sparsity (bootstrap) & 94.30 & 94.10 & 97.40 & 94.10 & 95.00 & 95.10 & 94.40 \\
   no model & 92.40 & 92.80 & 91.40 & 92.60 & 90.80 & 92.80 & 92.60 \\ 
   no model (bootstrap) & 92.80 & 92.90 & 93.50 & 93.10 & 94.30 & 92.80 & 93.00 \\   
\hline
\end{tabular}
\caption{Empirical confidence interval coverage for 95$\%$ confidence intervals\label{table.ci}}
\end{table}

 The standard variance estimator of the model-assisted estimators has negative bias, resulting in confidence intervals that are too narrow as observed in Tables \ref{table.var.est} and \ref{table.ci}. The logistic estimators have the worst coverage.  As the sparsity of the model increases, the model-assisted estimators tend to have better confidence interval coverage.  The penalized estimators get closer to the nominal confidence level while the non-penalized methods continue to have under-coverage.  The bootstrap variance estimators have less bias and regardless of the model size, they tend to produce confidence intervals that are fairly close to the nominal rate.  The bootstrap estimator is a good substitution for the standard variance estimator when the design is simple random sampling and a large number of auxiliary variables are available.

\subsection{Strength of model trials}\label{strength}

It is our perception that the GREG is more commonly employed than the LGREG, even when the study variable is categorical.  To study the impact of model misspecification, simulations were run where the strength of the model varied and the level of sparsity was held fixed.  Along with the intercept coefficient, two out of ten coefficients in the model were non-zero, with varying magnitudes.  The non-zero coefficient values can be found in Table \ref{Table:Ratio_MSE_magnitude}. The value of the intercept coefficient was selected so that the proportion of successes in the finite population was roughly 0.36.  Table \ref{Table:Ratio_MSE_magnitude} shows that as the logistic model strengthens, most of the estimators constructed from a logistic working model are more efficient than the other estimators.  The notable exception is LRIDGE which is rather negatively impacted by the large magnitude of the coefficients.  When the model is fairly sparse with a few large effects, the LLASSO is the preferred estimator.

\begin{table}[ht]
\centering
\begin{tabular}{rrrrrrrr}
  \hline
  & LRIDGE &LGREG & LASSO & GREG & PS & HT \\ 
  \hline
$\beta_o=-1.0$, $\beta_9=2.0$, $\beta_{10}=0.5$ & 1.05 & 1.05 & 1.06 & 1.09 & 1.21 & 1.62 \\
 $\beta_o=-1.5$, $\beta_9=4.0$, $\beta_{10}=1.0$  & 1.19 & 1.04 & 1.38 & 1.41 & 1.59 & 2.85 \\ 
  $\beta_o=-3.0$, $\beta_9=8.0$, $\beta_{10}=2.0$  & 1.57 & 1.18 & 1.89 & 1.96 & 2.11 & 4.62 \\ 
 $\beta_o=-6.0$, $\beta_9=16$, $\beta_{10}=4.0$ & 2.36 & 1.43 & 2.98 & 3.08 & 3.33 & 7.53 \\ 
   \hline
\end{tabular}
\caption{Ratio of design MSE for each estimator to design MSE of LLASSO while varying the coefficients in the logistic model}\label{Table:Ratio_MSE_magnitude}
\end{table}

\section{Application to the Forest Inventory and Analysis Program}\label{sec:app}

\begin{figure}[h!]
  \centering
    \includegraphics[width=0.75\textwidth]{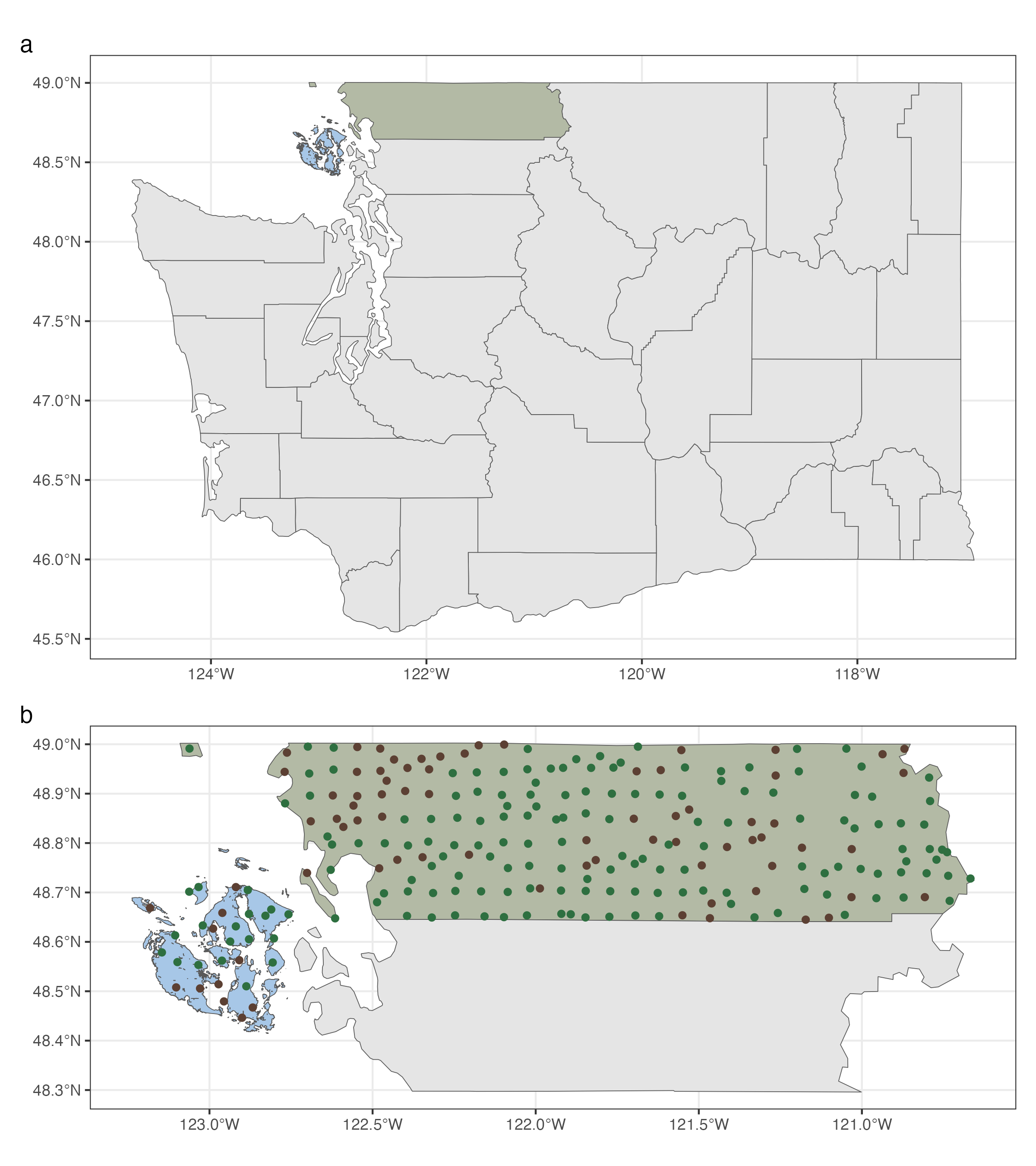} 
 \caption{\label{fig:region} Study regions of interest (a) and sampled points in the study regions of interest (b). In both subfigures, San Juan County is depicted in blue and Whatcom County is depicted in green. In subfigure (b), each point corresponds to an individual sample plot location, with point colors corresponding to forested (green) and unforested (brown) plots.}
 \end{figure}

We estimated the proportion of forested land, as defined in \citet{bec05}, in two counties in northwest Washington State: San Juan County and Whatcom County.  Since FIA produces estimates over spatial domains of the US, the study areas are discretized where each point or pixel of $U$, represents an equally sized section of land.  The two study areas, San Juan and Whatcom County, are colored in blue and green respectively in Figure \ref{fig:region}.  San Juan County is an archipelago of 176 (named) islands situated in the Salish Sea in the northwest corner of Washington State, with islands that have a range of forest cover proportions: from almost none, to nearly fully forested. Whatcom county is a nearly contiguous county in northwest Washington, however a few noncontiguous regions are part of Whatcom county; islands (most notably Lummi Island) and a pene-exclave (Point Roberts). San Juan County is much smaller than Whatcom County: it is discretized into 55,015 90 by 90 meters sqaured pixels and only contains 30 sampled plot locations. On the other hand, Whatcom County is 692,900 pixels and contains 212 sampled plot locations. Sampled plots were collected by FIA via an equal-probability quasi-systematic sampling design, defined in \citet{bec05}. 

FIA sample data containing our response variable (an indicator for if a plot is forested) along with auxiliary variables was sourced from the \texttt{forested} R data package \citep{forested}. Population auxiliary data was prepared by the FIA Program and then clipped and extracted for our two domains of interest. Fifteen auxiliary data products are available for every pixel in the two counties. Descriptions of the one response variable and fifteen auxiliary variables are included in Table~\ref{tab:vardescription}. More information on the auxiliary variables can be found in \citet{forested}. 

       \begin{table}[h!] 
       \centering

\resizebox{5in}{!}
{%
  \begin{tabular}{c|l}
\hline
\textbf{Code Name} & \textbf{Description} \\ 
  \hline
  elevation & Elevation, in meters. \\
  \hline
  eastness & Transformed aspect degrees to eastness (-100 to 100). \\
  \hline
  northness & Transformed aspect degrees to northness (-100 to 100). \\
  \hline
  roughness & Degree of irregularity of the plot. \\
  \hline
  tree\_no\_tree & LANDFIRE tree/non-tree lifeform mask \\
  \hline
  dew\_temp & Mean annual dewpoint temperature (1991-2020), in degrees Celsius. \\
  \hline
  precip\_annual & Mean annual precipitation (1991-2020), in mm × 100. \\
  \hline
  temp\_annual\_mean & Mean annual temperature (1991-2020), in degrees Celsius. \\
  \hline
  temp\_annual\_min & Mean annual minimum temperature (1991-2020), in degrees Celsius. \\
  \hline
  temp\_annual\_max & Mean annual maximum temperature (1991-2020), in degrees Celsius. \\
  \hline
  temp\_january\_min & Mean minimum temperature in January (1991-2020), in degrees Celsius. \\
  \hline
  vapor\_min & Minimum annual vapor pressure deficit (1991-2020), in Pa x 100. \\
  \hline
  vapor\_max & Maximum annual vapor pressure deficit (1991-2020), in Pa x 100. \\
  \hline
  canopy\_cover & Analytical Tree Canopy Cover, as a percent. \\
  \hline
  land\_type & Land cover type from European Space Agency (ESA) 2020 \\ & WorldCover global land cover product \\
  \hline
  \end{tabular} }
  \caption{Variable descriptions of response and auxiliary variables.\label{tab:vardescription}}
  \end{table}

We computed the following estimators: LLASSO, LRIDGE, LGREG, LASSO, GREG, PS, and HT. As in the simulation study, all estimators are computed via the \texttt{mase} R package \citep{mase}. To compute PS in our analyses, we use the land\_type variable. For variance estimation, the quasi-systematic sampling design is approximated by a simple random without replacement sampling design as suggested by \citet{bec05}.  

Table \ref{tab:vars} provides the selected variables for the lasso methods where the penalty parameter was chosen through 10-fold cross-validation. We see that the lasso method based on a linear model keeps a large number of auxiliary variables, even in San Juan County where there are only 30 sampled plots, where the logistic lasso keeps only four and six variables in San Juan and Whatcom counties, respectively. The only variables retained for every lasso method were eastness, precip\_annual, vapor\_max, and canopy\_cover.

 \begin{table}[h]
\centering
\begin{tabular}{|c|r|r||r|r|}
  \hline
&\multicolumn{2}{c||}{\textbf{San Juan}}&\multicolumn{2}{c|}{\textbf{Whatcom}}\\ 
& \small{LLASSO} & \small{LASSO} & \small{LLASSO} & \small{LASSO}\\
\hline
elevation & & X  & X & X\\
eastness &X& X &X &X\\
northness && X &X &X\\
roughness & & X & &X\\
tree\_no\_tree &  &  & &X\\
dew\_temp & & & &X\\
precip\_annual  &X &X &X &X\\
temp\_annual\_mean &  &  &&\\
temp\_annual\_min &  & X &&\\
temp\_annual\_max &  &  &&X\\
temp\_january\_min &  & X &&X\\
vapor\_min &  &  &&X\\
vappor\_max & X & X &X&X\\
canopy\_cover & X & X &X&X\\
land\_type &  & X &&X \\
   \hline
\end{tabular}
\caption{\label{tab:vars}Variables selected by the lasso methods.}
\end{table}

In San Juan County, where the sample size is limited to only 30 observations, the estimators performed quite differently to each other. Notably, the LLASSO and LRIDGE estimators produced the smallest estimated bootstrap standard errors and their estimates aligned well with the PS. The LGREG, LASSO, and GREG seem to have performed sporadically: their closed-form and bootstrap standard errors did not align, likely due to the over-parametrization of the model. This is most noticeable with the GREG estimator, where we see a closed-form standard error of 0.039 and a bootstrap standard error of 33.481. In the larger sample size Whatcom County, estimators performed more similarly to each other. Point estimates ranged from 0.656 to 0.693, and standard errors ranged from 0.020 to 0.032. All estimators had smaller estimated standard errors than the HT, but the difference in estimated standard errors between the model-assisted estimators was negligible. Across counties, the bootstrapped standard errors are equal to or larger than the standard errors found using the square root of Equation~\ref{var.est}, especially as the size of the model increases, except for PS and HT in Whatcom County where they are smaller by 0.001. Since Equation~\ref{var.est} is known to have negative bias, as observed in \S \ref{sparsity},  we used the bootstrap standard error when comparing the efficiency of the estimators. For San Juan County, the penalized regression estimators are more efficient than the PS and HT, but the unpenalized regression estimators are less efficient. In Whatcom County, HT is the least efficient estimator, and PS is just as efficient and sometimes more efficient than the estimators which incorporate the other variables.   

  \begin{table}[h]
\centering
\begin{tabular}{|r|r|r|r||r|r|r|}
  \hline
&\multicolumn{3}{c||}{\textbf{San Juan}}&\multicolumn{3}{c|}{\textbf{Whatcom}}\\ 
&&& \multicolumn{1}{c||}{\small{Bootstrap}} &&&\multicolumn{1}{c|}{\small{Bootstrap}} \\
& \multicolumn{1}{c|}{\small{Point}} & \multicolumn{1}{c|}{\small{Standard}} &\multicolumn{1}{c||}{\small{Standard}}  &\multicolumn{1}{c|}{\small{Point}} & \multicolumn{1}{c|}{\small{Standard}} & \multicolumn{1}{c|}{\small{Standard}} \\
& \multicolumn{1}{c|}{\small{Estimate}} & \multicolumn{1}{c|}{\small{Error}} &\multicolumn{1}{c||}{\small{Error}}  &\multicolumn{1}{c|}{\small{Estimate}} & \multicolumn{1}{c|}{\small{Error}} & \multicolumn{1}{c|}{\small{Error}} \\
\hline
  LLASSO & 0.707 & 0.058 & 0.069 & 0.672 & 0.023 & 0.024 \\ 
  LRIDGE & 0.661 & 0.071 & 0.076 & 0.670 & 0.023 & 0.024 \\ 
  LGREG & 0.502 & 0.000 & 0.104 & 0.666 & 0.020 & 0.024 \\ 
  LASSO & 0.556 & 0.059 & 0.119 & 0.662 & 0.021 & 0.022 \\ 
  GREG & 0.401 & 0.039 & 33.481 & 0.656 & 0.021 & 0.024 \\ 
PS &  0.758 & 0.080 &  0.091 & 0.658 & 0.024 & 0.023 \\
HT & 0.633 & 0.089 & 0.094 & 0.693 & 0.032 & 0.031 \\ 
   \hline
\end{tabular}
\caption{\label{tree.table}Point estimates of the proportion of forested land in two counties of Washington State and the corresponding estimated standard errors.}
\end{table}

\section{Conclusion}\label{sec:conclusion}

We introduced a penalized logistic generalized regression estimator for complex survey data. Under standard assumptions for design-based asymptotics, we show that the introduced estimator has asymptotic normality and that the standard model-assisted variance estimator is design-consistent. Through a simulation study, we showed the estimator performs well for finite samples and studied its performance as we vary model sparsity and strength. Finally, in a data application, we showcase the practical use of the estimator for the FIA Program. 

The FIA Program, like many other organizations who report official statistics, has access to many sources of auxiliary data and these data have the potential to increase the efficiency of estimation methods.  However, including unnecessary data can negatively impact that efficiency.  Penalized estimators constrain the coefficient estimates and can lead to a more stable solution.  In turn, the regression estimators built on these penalized techniques, like the one introduced here, tend to be more efficient than the regression estimators based on the full model, especially if the true model is sparse.  For their official reporting, the FIA Program currently uses post-stratification to estimate population quantities, regardless of whether or not the forest attribute is quantitative or categorical.  As the simulation study and application show, there are opportunities for the FIA Program to gain efficiency over their standard post-stratification based estimation by including important variables not used in the creation of the post-strata. The penalized techniques presented in this paper can help the FIA Program find these relevant variables.

\clearpage

\section*{Acknowledgements}
The authors would like to thank the US Forest Service, Forest Inventory and Analysis Program for the data.

\section*{Data availability statement}
The sample data used in our data application can be accessed in the \texttt{forested} R package. However, pixel-level population auxiliary data were sourced from confidential raster data, which cannot be shared publicly. Requests for data used here or other requests including confidential data should be directed to FIA’s Spatial Data Services (\url{https://www.fs.usda.gov/research/programs/fia/sds}).

\clearpage
\bibliography{logisticlassopaper}

\clearpage

\appendix
\renewcommand{\thefigure}{A.\arabic{figure}}  
\renewcommand{\thetable}{A.\arabic{table}}   
\setcounter{figure}{0}                      
\setcounter{table}{0}                         

\section{Appendix}\label{sec:appendixA}

\small

To prove Theorem \ref{thm:CLT.P.hat.lasso}, we first need to find the asymptotic distribution of the survey-weighted, penalized coefficient estimators. Assume the following conditions as $N \rightarrow \infty$ with $d$ fixed:

\renewcommand{\theenumi}{A\arabic{enumi}}
\begin{enumerate}

\item \label{lambda.N} The penalty parameter satisfies $\lambda_{\sN{N}} = o(\sqrt{N})$.

\item The sampling rate $nN^{-1} \rightarrow \pi \in (0,1)$.   \label{nrateN}

\item  \label{V1.mat}
Let $\phi(x) = \log(1+e^{x})$. The matrix
\begin{align*}
\widehat{\bm{V}}_{\sN{N}1} = \frac{1}{N} \sum_{i \in U_{\sN{N}}} \bm{x}_i  \bm{x}_i^{\sN{T}} \phi^{''}(\bm{x}_i^{\sN{T}} \bm{\beta}_{\sN{N}}) \frac{I_i}{\pi_i}
\end{align*}
is positive definite and there exists positive definite matrix $\bm{V}_1$ such that

$\widehat{\bm{V}}_{\sN{N}1} - \bm{V}_1 = o_{\rm{p}}(1)$ elementwise.

\item \label{phi3} For all $\bm{\beta} \in Q$, a sufficiently large open set that contains $\bm{\beta}_{\sN{N}}$, assume
\begin{align*}
(p+3)(p+2)(p+1) \frac{1}{N}\sum_{i \in U_{\sN{N}}} \phi^{'''}(\bm{x}_i^{\sN{T}} \bm{\beta}_{\sN{N}}) \max_{1 \leq j,k,l \leq p} |x_j x_k x_l| \frac{I_i}{\pi_i} = O_{\rm{p}}(1).
\end{align*}

\item \label{HT.cov.big} The matrix
$\bm{\Sigma} = \underset{N \rightarrow \infty}{\lim}  \bm{\Sigma}_{\sN{N}}$ exists and is positive definite, where $\bm{\Sigma}_{\sN{N}}$ is the design covariance matrix
\begin{align*}
\bm{\Sigma}_{\sN{N}} &= \begin{bmatrix} \bm{\Sigma}_{\sN{N}}^{(yxyx)} & \bm{\Sigma}_{\sN{N}}^{(yx\phi^{'}x)} \\
\bm{\Sigma}_{\sN{N}}^{(\phi^{'}xyx)} & \bm{\Sigma}_{\sN{N}}^{(\phi^{'}x\phi^{'}x)}
\end{bmatrix} \\
&= \begin{bmatrix} \frac{n}{N^2}\underset{i,j \in U_{\sN{N}}}{\sum \sum} \frac{\pi_{ij} - \pi_i \pi_j}{\pi_i \pi_j} y_i \bm{x}_i y_j\bm{x}_j^{\sN{T}} & \frac{n}{N^2}\underset{i,j \in U_{\sN{N}}}{\sum \sum} \frac{\pi_{ij} - \pi_i \pi_j}{\pi_i \pi_j} \phi^{'}(\bm{x}_i^{\sN{T}} \bm{\beta}_{\sN{N}}) \bm{x}_i y_j \bm{x}_j^{\sN{T}} \\
\frac{n}{N^2}\underset{i,j \in U_{\sN{N}}}{\sum \sum} \frac{\pi_{ij} - \pi_i \pi_j}{\pi_i \pi_j} \phi^{'}(\bm{x}_i^{\sN{T}} \bm{\beta}_{\sN{N}}) \bm{x}_i y_j \bm{x}_j^{\sN{T}}&  \frac{n}{N^2}\underset{i,j \in U_{\sN{N}}}{\sum \sum} \frac{\pi_{ij} - \pi_i \pi_j}{\pi_i \pi_j} \phi^{'}(\bm{x}_i^{\sN{T}} \bm{\beta}_{\sN{N}}) \bm{x}_i \phi^{'}(\bm{x}_j^{\sN{T}} \bm{\beta}_{\sN{N}}) \bm{x}_j^{\sN{T}}\end{bmatrix}
\end{align*}
of the following $2(p+1)$ vector of centered, standardized Horvitz-Thompson estimators:
\begin{align}\label{z.big}
\bm{z}_{\sN{N}} = \begin{bmatrix} \frac{\sqrt{n}}{N} \sum_{i \in U_{\sN{N}}} y_i\bm{x}_i \left( \frac{I_i}{\pi_i} -1 \right) \\ \frac{\sqrt{n}}{N} \sum_{i \in U_{\sN{N}}} \phi^{'}(\bm{x}_i^{\sN{T}} \bm{\beta}_{\sN{N}}) \bm{x}_i \left( \frac{I_i}{\pi_i} -1 \right) \end{bmatrix}.
\end{align}

\item \label{HT.CLT.big} The normalized, centered, Horvitz-Thompson estimators defined in (\ref{z.big}) satisfy a central limit theorem:
$\bm{z}_{\sN{N}} \overset{D}{\rightarrow} \mathcal{N} (\bm{0}, \bm{\Sigma})$.

\item \label{est.cov.mat} The subvector $\bm{z}_{\sN{N}}^* = (z_{\sN{N}1}, z_{\sN{N}p+2})$  of the vector $\bm{z}_{\sN{N}}$ defined in (\ref{z.big}) has a design-consistent covariance matrix estimator
\begin{align*}
\widehat{\bm{\Sigma}}_{\sN{N}}^* & =
\begin{bmatrix} \frac{n}{N^2}\underset{i,j \in U_{\sN{N}}}{\sum \sum} \frac{\pi_{ij} - \pi_i \pi_j}{\pi_i \pi_j} \frac{I_i I_j}{\pi_{ij}} y_i y_j& \frac{n}{N^2}\underset{i,j \in U_{\sN{N}}}{\sum \sum} \frac{\pi_{ij} - \pi_i \pi_j}{\pi_i \pi_j} \frac{I_i I_j}{\pi_{ij}} y_i \phi^{'}(\bm{x}_j^{\sN{T}} \hat{\bm{\beta}}_{\sN{N}}) \\
\frac{n}{N^2}\underset{i,j \in U_{\sN{N}}}{\sum \sum} \frac{\pi_{ij} - \pi_i \pi_j}{\pi_i \pi_j} \frac{I_i I_j}{\pi_{ij}} \phi^{'}(\bm{x}_i^{\sN{T}}\hat{\bm{\beta}}_{\sN{N}}) y_j& \frac{n}{N^2}\underset{i,j \in U_{\sN{N}}}{\sum \sum} \frac{\pi_{ij} - \pi_i \pi_j}{\pi_i \pi_j} \frac{I_i I_j}{\pi_{ij}} \phi^{'}(\bm{x}_i^{\sN{T}}\hat{\bm{\beta}}_{\sN{N}})\phi^{'}(\bm{x}_j^{\sN{T}} \hat{\bm{\beta}}_{\sN{N}})\end{bmatrix}
\\&
=\begin{bmatrix}  \widehat{\Sigma}_{\sN{N}}^{(yy)}&\widehat{\Sigma}_{\sN{N}}^{(y\phi^{'})} \\ \widehat{\Sigma}_{\sN{N}}^{(\phi^{'}y)} & \widehat{\Sigma}_{\sN{N}}^{(\phi^{'}\phi^{'})}\end{bmatrix},
\end{align*}
in the sense that $\widehat{\bm{\Sigma}}_{\sN{N}}^* - \bm{\Sigma}_{\sN{N}}^* = o_{\rm{p}}(1)$ elementwise
where $\bm{\Sigma}_{\sN{N}}^*$ is the covariance matrix of $\bm{z}_{\sN{N}}^*$.

\end{enumerate}

 \setcounter{theorem}{0}
  \renewcommand{\thetheorem}{A.\arabic{theorem}}

\begin{theorem}\label{thm:CLT.beta.design}
Under assumptions (\ref{lambda.N})--(\ref{HT.CLT.big}) and for $\gamma =1,2$, the penalized coefficient estimates $\widehat{\bm{\beta}}^{\sN{\gamma}}_{\sN{N}}$ satisfy
\begin{align*}
\sqrt{N} \left(\widehat{\bm{\beta}}^{\sN{\gamma}}_{\sN{N}} - \bm{\beta}_{\sN{N}} \right) \overset{D}{\rightarrow} \mathcal{N} \left(\bm{0}, \pi^{-1}\bm{V}_1^{-1} \bm{V}_2 \bm{V}_1^{-1}\right)
\end{align*}
as $N\to\infty$, where the matrix $\bm{V_2}$ is defined by
\begin{align*}
\bm{V_2} =  \bm{\Sigma}^{(yxyx)}- 2\bm{\Sigma}^{(yx\phi^{'}x)} + \bm{\Sigma}^{(\phi^{'}x\phi^{'}x)}.
\end{align*}

\end{theorem}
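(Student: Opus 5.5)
The plan is the standard Knight--Fu style argument for penalized M-estimators, adapted to the design-based setting. Reparametrize by $\bm{u} = \sqrt{N}(\bm{\beta} - \bm{\beta}_{\sN{N}})$ and consider the centered, rescaled objective
\begin{align*}
G_{\sN{N}}(\bm{u}) = \frac{n}{N}\sum_{i \in U_{\sN{N}}} \frac{I_i}{\pi_i}\left\{ -y_i \bm{x}_i^{\sN{T}}\frac{\bm{u}}{\sqrt{N}} + \phi\left(\bm{x}_i^{\sN{T}}\bm{\beta}_{\sN{N}} + \bm{x}_i^{\sN{T}}\frac{\bm{u}}{\sqrt{N}}\right) - \phi(\bm{x}_i^{\sN{T}}\bm{\beta}_{\sN{N}})\right\} + \frac{n\lambda_{\sN{N}}}{N}\sum_{j=1}^d \left( \left|\beta_{\sN{N}j} + \frac{u_j}{\sqrt{N}}\right|^{\gamma} - |\beta_{\sN{N}j}|^{\gamma}\right).
\end{align*}
It is minimized at $\hat{\bm{u}}_{\sN{N}} = \sqrt{N}(\widehat{\bm{\beta}}^{\gamma}_{\sN{N}} - \bm{\beta}_{\sN{N}})$. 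The exact multiplicative normalization matters less than keeping the three pieces on a common scale; the factor $n/N \to \pi$ from (\ref{nrateN}) is what eventually produces the $\pi^{-1}$ in the limit. I would show that $G_{\sN{N}}$ converges in distribution, as a convex process, to a convex limit with a unique minimizer. By the convexity argmin lemma (Geyer; Hjort--Pollard), finite-dimensional convergence then suffices to pass to convergence of the minimizers.

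First, Taylor expand $\phi$ about $\bm{x}_i^{\sN{T}}\bm{\beta}_{\sN{N}}$ to third order.

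\textbf{Linear term.} This term is $-\bm{u}^{\sN{T}}\frac{1}{\sqrt{N}}\sum_{i} \frac{I_i}{\pi_i}\{y_i - \phi'(\bm{x}_i^{\sN{T}}\bm{\beta}_{\sN{N}})\}\bm{x}_i$, up to the $n/N$ factor. The key observation is that $\bm{\beta}_{\sN{N}}$ solves the finite population score equation $\sum_{U}\{y_i - \phi'(\bm{x}_i^{\sN{T}}\bm{\beta}_{\sN{N}})\}\bm{x}_i = \bm{0}$. We may therefore subtract this zero and write the term as $-\bm{u}^{\sN{T}}\sqrt{N/n}\,(\bm{z}_{\sN{N}}^{(1)} - \bm{z}_{\sN{N}}^{(2)})$, where $\bm{z}_{\sN{N}}^{(1)}, \bm{z}_{\sN{N}}^{(2)}$ are the two blocks of (\ref{z.big}). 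By (\ref{HT.cov.big})--(\ref{HT.CLT.big}) and the continuous mapping theorem, $\bm{z}_{\sN{N}}^{(1)} - \bm{z}_{\sN{N}}^{(2)} \overset{D}{\rightarrow} \bm{W} \sim \mathcal{N}(\bm{0}, \bm{V}_2)$, where $\bm{V}_2 = \bm{\Sigma}^{(yxyx)} - 2\bm{\Sigma}^{(yx\phi'x)} + \bm{\Sigma}^{(\phi'x\phi'x)}$; the middle term uses the symmetry of the cross block in the quadratic form.

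\textbf{Quadratic term.} This term is $\frac12 \bm{u}^{\sN{T}}\widehat{\bm{V}}_{\sN{N}1}\bm{u} \to \frac12\bm{u}^{\sN{T}}\bm{V}_1\bm{u}$ in probability by (\ref{V1.mat}).

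\textbf{Cubic remainder.} The remainder is bounded by $N^{-1/2}$ times a constant depending on $\|\bm{u}\|^3$ times the $O_{\rm{p}}(1)$ average in (\ref{phi3}). For fixed $\bm{u}$ the intermediate point lies in $Q$ for large $N$, so the remainder is $o_{\rm{p}}(1)$.

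\textbf{Penalty.} Before rescaling, the penalty difference is bounded in absolute value by $\lambda_{\sN{N}}|u_j|/\sqrt{N}$ when $\gamma=1$. When $\gamma=2$ it equals $\lambda_{\sN{N}}(2\beta_{\sN{N}j}u_j/\sqrt{N} + u_j^2/N)$, which is of the same order provided $\bm{\beta}_{\sN{N}}$ stays bounded. I would assume this boundedness implicitly, since $\bm{\beta}_{\sN{N}}$ lies in the fixed set $Q$. Because $\lambda_{\sN{N}} = o(\sqrt{N})$ by (\ref{lambda.N}), both vanish for each fixed $\bm{u}$.

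Combining these, $G_{\sN{N}}(\bm{u}) \overset{D}{\rightarrow} G(\bm{u}) = -\pi^{-1/2}\bm{u}^{\sN{T}}\bm{W} + \frac12\bm{u}^{\sN{T}}\bm{V}_1\bm{u}$, after multiplying through by the appropriate constant. The limit holds jointly in finitely many $\bm{u}$, since all randomness enters through $\bm{z}_{\sN{N}}$ and $\widehat{\bm{V}}_{\sN{N}1}$. $G$ is strictly convex because $\bm{V}_1$ is positive definite, and its unique minimizer is $\pi^{-1/2}\bm{V}_1^{-1}\bm{W} \sim \mathcal{N}(\bm{0}, \pi^{-1}\bm{V}_1^{-1}\bm{V}_2\bm{V}_1^{-1})$. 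Each $G_{\sN{N}}$ is convex, since both $\phi$ and $|\cdot|^{\gamma}$ for $\gamma \geq 1$ are convex. The argmin lemma therefore gives $\hat{\bm{u}}_{\sN{N}} \overset{D}{\rightarrow}$ this minimizer, which is the claim.

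The main obstacle is the remainder and the tightness bookkeeping. Assumption (\ref{phi3}) is stated at $\bm{\beta}_{\sN{N}}$ rather than uniformly over intermediate points in $Q$. I would read it as holding at every $\bm{\beta}\in Q$, and note that $\phi'''$ is bounded in any case ($|\phi'''| \le 1/(6\sqrt{3})$). With that, the remainder can be controlled by the Horvitz--Thompson average of $\max|x_jx_kx_l|$ for each fixed $\bm{u}$. Convexity then removes the need for uniform control over $\bm{u}$.
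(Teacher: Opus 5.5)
Your proposal is correct and follows essentially the same route as the paper: the Knight--Fu style convexity argument applied to the recentered criterion $\bm{A}_{N}(\bm{\beta}_{N}+\cdot)-\bm{A}_{N}(\bm{\beta}_{N})$. As in the paper, you use the population score equation to turn the linear term into the Horvitz--Thompson difference of the two blocks of $\bm{z}_{N}$, (A3) for the quadratic term, (A4) for the cubic remainder, (A1) to make the penalty vanish, and uniqueness of the limit's minimizer to conclude. The differences are cosmetic, since you rescale by $\bm{u}/\sqrt{N}$ rather than $\bm{u}\sqrt{n}/N$, or in your favor: your explicit $\gamma=2$ penalty expansion correctly needs bounded $\bm{\beta}_{N}$ and yields $o(1)$, whereas the paper's sign-based expression for $C_{N4}$ only fits $\gamma=1$.
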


Similar to Theorem 2.1 in \citet*{mcc17}, we will prove Theorem \ref{thm:CLT.beta.design} by finding the asymptotic distribution of the criterion which the estimator minimizes. 

\begin{proof}[Proof of Theorem \ref{thm:CLT.beta.design}]

For $\bm{u} \in \mathbb{R}^{p+1}$, write the sample criterion based on the survey-weighted negative log-likelihood as 
\begin{align*}
\bm{A}_{\sN{N}}(\bm{u}) :=   \sum_{i \in s} \frac{1}{\pi_i} \left\{-y_i \bm{x}_i^{\sN{T}} \bm{u} +  \phi\left(\bm{x}_i^{\sN{T}} \bm{u} \right) \right\} + \lambda_{\sN{N}} \sum_{j=1}^d |u_j|^{\gamma}
\end{align*}
where $\widehat{\bm{\beta}}_{\sN{N}} = \underset{\bm{u}}{\arg\min} \bm{A}_{\sN{N}}(\bm{u})$ and let
\begin{align*}
\bm{C}_{\sN{N}}(\bm{u}) := \bm{A}_{\sN{N}}\left(\bm{\beta}_{\sN{N}} + \bm{u}\frac{\sqrt{n}}{N}\right) - \bm{A}_{\sN{N}}(\bm{\beta}_{\sN{N}}).
\end{align*}
Then $N n^{-1/2} ( \widehat{\bm{\beta}}_{\sN{N}} - \bm{\beta}_{\sN{N}})= \underset{\bm{u}}{\arg\min} \bm{C}_{\sN{N}}(\bm{u})$. 
Notice that $\phi^{'}(x) = \mu(x)$.  Applying Taylor's Theorem to $\phi(\cdot)$ and utilizing the fact that 
\begin{align*}
\frac{\sqrt{n}}{N} \sum_{i \in U_{\sN{N}}} \left[y_i - \phi^{'}\left(\bm{x}_i^{\sN{T}}\bm{\beta}_{\sN{N}}\right) \right]\bm{x}_i^{\sN{T}} =0,
\end{align*}
 we get
\begin{align*}
\bm{C}_{\sN{N}}(\bm{u}) =& 
-\frac{\sqrt{n}}{N} \sum_{i \in s} \frac{y_i \bm{x}_i^{\sN{T}}}{\pi_i} \bm{u} +\sum_{i \in s} \left\{ \phi\left(\bm{x}_i^{\sN{T}}\left[\bm{\beta}_{\sN{N}} + \bm{u}\frac{\sqrt{n}}{N}\right]\right) -  \phi\left(\bm{x}_i^{\sN{T}}\bm{\beta}_{\sN{N}}\right)  \right\}\frac{1}{\pi_i} \\
 &+ \lambda_{\sN{N}} \sum_{j=1}^d\left( \left|\beta_{\sN{N}j} + \frac{u_j\sqrt{n}}{N} \right|^{\gamma} - \left|\beta_{\sN{N}j} \right|^{\gamma} \right) \\
&= 
-\frac{\sqrt{n}}{N} \sum_{i \in s} \frac{1}{\pi_i}\left[y_i - \phi^{'}\left(\bm{x}_i^{\sN{T}}\bm{\beta}_{\sN{N}}\right) \right]\bm{x}_i^{\sN{T}}\left(\frac{I_i}{\pi_i} -1 \right)\bm{u} +\bm{u}^{\sN{T}}\frac{n}{2N^2} \sum_{i \in s} \frac{1}{\pi_i} \phi^{''}\left(\bm{x}_i^{\sN{T}}\bm{\beta}_{\sN{N}}\right)  \bm{x}_i\bm{x}_i^{\sN{T}}\bm{u} \\
 &+6^{-1} \left(\frac{\sqrt{n}}{N}\right)^{3} \sum_{i \in s} \frac{1}{\pi_i} \phi^{'''}\left(\bm{x}_i^{\sN{T}}\bm{\beta}^{o}\right)  \left(\bm{x}_i^{\sN{T}}\bm{u}\right)^{3}  + \lambda_{\sN{N}} \sum_{j=1}^d\left( \left|\beta_{\sN{N}j} + \frac{u_j\sqrt{n}}{N} \right|^{\gamma} - \left|\beta_{\sN{N}j} \right|^{\gamma} \right)\\
& := C_{\sN{N}1}(\bm{u}) + C_{\sN{N}2}(\bm{u})  + C_{\sN{N}3}(\bm{u}) + C_{\sN{N}4}(\bm{u}) 
\end{align*}
where $\bm{\beta}^{o}$ is between $\bm{\beta}_{\sN{N}}$ and $\bm{\beta}_{\sN{N}} + \bm{u}\sqrt{n}N^{-1}$. Assumptions (\ref{HT.cov.big}) and (\ref{HT.CLT.big}) imply
\begin{align*}
 C_{\sN{N}1}(\bm{u}) \overset{D}{\rightarrow} \mathcal{N} \left(\bm{0}, \bm{V}_2\right) \bm{u}^{\sN{T}}
\end{align*}
since the variance of the centered Horvitz-Thompson estimator can be written
\begin{align*}
\mbox{Var}_{\rm{p}}\left(-\frac{\sqrt{n}}{N} \sum_{i \in s} \frac{1}{\pi_i}\left[y_i - \phi^{'}\left(\bm{x}_i^{\sN{T}}\bm{\beta}_{\sN{N}}\right) \right]\bm{x}_i^{\sN{T}}   \right) 
&=   \bm{\Sigma}^{(yxyx)}_{\sN{N}}- 2\bm{\Sigma}^{(yx\phi^{'}x)}_{\sN{N}} + \bm{\Sigma}^{(\phi^{'}x\phi^{'}x)}_{\sN{N}}\\
&:= \bm{V}_{\sN{N}2}
\end{align*}
and by assumption (\ref{HT.cov.big}),  $\bm{V} = \underset{N \rightarrow \infty}{\lim}  \bm{V}_{\sN{N}}$.  The second term, $C_{\sN{N}2}(\bm{u}) $, converges in probability to $2^{-1}\bm{u}^{\sN{T}}  \pi \bm{V}_1 \bm{u}$ by assumptions (\ref{nrateN}) and  (\ref{V1.mat}). By assumption (\ref{phi3}), $C_{\sN{N}3}(\bm{u}) = O_p(N^{-1/2})$.  For the last term,
$$
C_{\sN{N}4}(\bm{u})  = \frac{\sqrt{n}}{N} \lambda_{\sN{N}} \sum_{j=1}^d \mbox{sign}(\beta_{\sN{N}j}) |u_j|^{\gamma} \leq \frac{\sqrt{n}}{N} \lambda_{\sN{N}} \sum_{j=1}^d |u_j|^{\gamma} =\frac{\sqrt{n}}{N} \lambda_{\sN{N}} ||\bm{u}||_{l\gamma}^{\gamma} =O(N^{-1/2})
$$
by assumptions (\ref{lambda.N}) and (\ref{nrateN}).  By Slutsky's Theorem, we get 
\begin{align*}
\bm{C}_{\sN{N}}(\bm{u}) \overset{D}{\rightarrow} \bm{C}(\bm{u}) 
\end{align*}
where
\begin{align*}
\bm{C}(\bm{u}) = - \bm{u}^{\sN{T}} N (\bm{0}, \bm{V_2}) + \frac{1}{2} \bm{u}^{\sN{T}}  \pi \bm{V}_1 \bm{u}.
\end{align*}
Since $\bm{C}_{\sN{N}}(\bm{u})$ is convex and $\bm{C}(\bm{u})$ has a unique minimum, we have the result.  

\end{proof}

\clearpage

\section{Appendix}\label{sec:appendixB}

\small

We'd like to ensure that the estimate of our regression coefficients, $\widehat{\bm{\beta}}^{\gamma}$, in Equation~\ref{coefPLGREG} is properly estimated by \texttt{glmnet}. Recall Equation~\ref{coefPLGREG}:
\begin{align*}
 \widehat{\bm{\beta}}^{\gamma} &=    \underset{\bm{\beta}}{\arg\min} \left[ -  \sum_{i \in s} \frac{1}{\pi_i} \left\{y_i \bm{x}_i^{\sN{T}} \bm{\beta} -  \log \left[1 + \exp(\bm{x}_i^{\sN{T}} \bm{\beta}) \right] \right\}  + \lambda\sum_{j=1}^d |\beta_j|^{\gamma} \right].
\end{align*}
When supplying \texttt{glmnet} with survey weights $w_i = \frac{1}{\pi_i}$, \texttt{glmnet} normalizes these weights before optimization by replacing the supplied weights with $w_i\left({\sum_{i\in s}w_i}\right)^{-1}$. Further, \texttt{glmnet} optimizes a different equation for lasso ($\gamma = 1$) and ridge ($\gamma = 2$) regression. In particular, \texttt{glmnet} optimizes:

\begin{align*}
 \widehat{\bm{\beta}}^{\gamma}_{\texttt{glmnet}} &=   \begin{cases}
  \underset{\bm{\beta}}{\arg\min} \left[ -  \sum_{i \in s} \frac{w_i}{W} \left\{y_i \bm{x}_i^{\sN{T}} \bm{\beta} -  \log \left[1 + \exp(\bm{x}_i^{\sN{T}} \bm{\beta}) \right] \right\}  + \lambda\sum_{j=1}^d |\beta_j|^{1} \right] & \text{for } \gamma = 1 \\
  \underset{\bm{\beta}}{\arg\min} \left[ -  \sum_{i \in s} \frac{w_i}{W} \left\{y_i \bm{x}_i^{\sN{T}} \bm{\beta} -  \log \left[1 + \exp(\bm{x}_i^{\sN{T}} \bm{\beta}) \right] \right\}  + \frac{\lambda}{2}\sum_{j=1}^d |\beta_j|^{2} \right] & \text{for } \gamma = 2
\end{cases} 
\end{align*}

where $W = \sum_{i\in s}w_i$. We will show $\widehat{\bm{\beta}}^{\gamma} = \widehat{\bm{\beta}}^{\gamma}_{\texttt{glmnet}}$.

\begin{proof}

Let $\lambda_{\texttt{glmnet}} = \lambda  W^{-1}$ if $\gamma = 1$, and $\lambda_{\texttt{glmnet}} = 2\lambda  W^{-1}$ if $\gamma = 2$. Then for lasso regression:

\begin{align*}
\widehat{\bm{\beta}}^{\gamma = 1}_{\texttt{glmnet}} &=  \underset{\bm{\beta}}{\arg\min} \left[ -  \sum_{i \in s} \frac{w_i}{W} \left\{y_i \bm{x}_i^{\sN{T}} \bm{\beta} -  \log \left[1 + \exp(\bm{x}_i^{\sN{T}} \bm{\beta}) \right] \right\}  + \lambda_{\texttt{glmnet}}\sum_{j=1}^d |\beta_j|^{1} \right] \\
&=  \underset{\bm{\beta}}{\arg\min} \left[  \frac{1}{W} \left(- \sum_{i \in s} w_i \left\{y_i \bm{x}_i^{\sN{T}} \bm{\beta} -  \log \left[1 + \exp(\bm{x}_i^{\sN{T}} \bm{\beta}) \right] \right\}  + W\lambda_{\texttt{glmnet}}\sum_{j=1}^d |\beta_j|^{1} \right) \right] \\
&= \underset{\bm{\beta}}{\arg\min} \left[ - \sum_{i \in s} w_i \left\{y_i \bm{x}_i^{\sN{T}} \bm{\beta} -  \log \left[1 + \exp(\bm{x}_i^{\sN{T}} \bm{\beta}) \right] \right\}  + W\lambda_{\texttt{glmnet}}\sum_{j=1}^d |\beta_j|^{1}  \right] \\
&= \underset{\bm{\beta}}{\arg\min} \left[ - \sum_{i \in s} w_i \left\{y_i \bm{x}_i^{\sN{T}} \bm{\beta} -  \log \left[1 + \exp(\bm{x}_i^{\sN{T}} \bm{\beta}) \right] \right\}  + \lambda \sum_{j=1}^d |\beta_j|^{1}  \right] \\
&= \widehat{\bm{\beta}}^{\gamma = 1}.
\end{align*}

Similar logic applies to ridge regression.  \qedhere

\end{proof}

\end{document}